\newtheorem{theorem}{Theorem}[section]
\newtheorem{lemma}[theorem]{Lemma}
\newtheorem{definition}{Definition}
\newtheorem{corollary}{Corollary}
\newcommand{\sq}{\hbox{\rlap{$\sqcap$}$\sqcup$}}
\newcommand{\qed}{\hspace*{\fill}\sq}
\newenvironment{proof}{\noindent {\bf Proof.}\ }{\qed\par\vskip 4mm\par}
\begin{document}

\title{Window-Based Greedy Contention Management\\for Transactional Memory\footnote{Gokarna Sharma is recommended for the best student paper award.}
\\
       {\Large (Regular Presentation)}
       }

\author{Gokarna Sharma, Brett Estrade, and Costas Busch \\
   Department of Computer Science\\
   Louisiana State University\\
   Baton Rouge, LA 70803, USA\\
   gokarna@csc.lsu.edu, estrabd@lsu.edu, busch@csc.lsu.edu  \\
   }
\date{}
\maketitle \thispagestyle{empty}

\begin{abstract}
We consider greedy contention managers for transactional memory
for $M\times N$ excution windows of transactions with $M$ threads and $N$ transactions per thread.
Assuming that each transaction conflicts with at most $C$ other transactions inside the window,
a trivial greedy contention manager can schedule them within $CN$ time.
In this paper, we show that there are much better schedules.
We present and analyze two new randomized greedy contention management algorithms.
The first algorithm {\sf Offline-Greedy} produces a schedule of length $O(C+N\log(MN))$
with high probability, and gives competitive ratio $O(\log (MN))$ for $C\leq N \log (MN)$.
The offline algorithm depends on knowing the conflict graph.
The second algorithm {\sf Online-Greedy} produces a schedule of length $O(C \log (MN) + N \log^2(MN))$
with high probability which is only a $O(\log (NM))$ factor worse, but does not require knowledge of the conflict graph.
We also give an adaptive version which achieves similar worst-case performance
and $C$ is determined on the fly under execution.
Our algorithms provide new tradeoffs for greedy transaction scheduling that parameterize window sizes
and transaction conflicts within the window.
\end{abstract}

\centerline{{\bf Keywords}: transactional memory, contention managers, greedy scheduling,
execution window.
}

\section{Introduction}

Multi-core architectures present both an opportunity and challenge for multi-threaded software.
The opportunity is that threads will be available to an unprecedented degree,
and the challenge is that more programmers will be exposed to concurrency related synchronization problems
that until now were of concern only to a selected few.
Writing concurrent programs is difficult because of the complexity of ensuring proper synchronization.
Conventional lock based synchronization suffers from well known limitations,
so researchers considered non-blocking transactions as an alternative.
Software Transactional Memory \cite{Shavit95, Her03, Herlihy93transactionalmemory} systems use lightweight  and composable  in-memory software transactions to address concurrency
in multi-threaded systems ensuring safety all the time \cite{Har03, Har05}.


A contention management strategy is responsible for the STM system as a whole to make progress.
If transaction $T$ discovers it is about to conflict with $T'$,
it has two choices, it can pause, giving $T'$ a chance to finish,
or it can proceed, forcing $T'$ to abort.
To solve this problem efficiently,
$T$ will consult the contention manager module which choice to make.
Of particular interest are {\em greedy contention managers}
where a transaction starts again immediately after every abort.
Several (greedy) contention managers have been proposed in the literature.
However, most contention managers have been assessed only experimentally by specific benchmarks.
There is a small amount of work in the literature which analyzes formally
the performance of contention managers.
The competitive ratio results are not encouraging since the bounds are not tight.
For example with respect to the $O(s)$ bound in \cite{Att06},
when the number of resources increases, the performance degrades linearly.
A question arises whether someone can achieve tighter bounds.
A difficulty in obtaining tight bounds is that the algorithms studied in
\cite{Att06,Gue05a,Gue05b, scherer05, Schneider09}
apply to the {\em one-shot scheduling problem}, where each thread issues a single transaction.
One-shot problems can be related to graph coloring.
It can be shown that the problem of finding the chromatic number of a graph
can be reduced to finding an optimal schedule for a one-shot problem.
Since it is known that graph coloring is a very hard problem to approximate,
the one-shot problem is very hard to approximate too \cite{Schneider09}.

In order to obtain better formal bounds,
we propose to investigate execution window of transactions (see the left part of Figure \ref{figure:window}),
which has the potential to overcome the limitations of coloring in certain circumstances.
An $M \times N$ window of transactions $W$
consists of $M$ threads with an execution sequence of $N$ different transactions per thread.
Let $C$ denote the maximum number of conflicting transactions for any transaction in the window
($C$ is the maximum degree of the respective conflict graph of the window).
A straightforward upper bound is $\min(C N, MN)$,
since $CN$ follows from the observation that
each transaction in a thread may be delayed at most $C$ time steps by its conflicting transactions,
and $MN$ follows from the serialization of the transactions.
If we partition the window into $N$ one-shot transaction sets,
each of size $M$, then the competitive ratio using the one-shot analysis results is $O(s N)$.
When we use the Algorithm {\sf RandomizedRounds} \cite{Schneider09} $N$ times
then the completion time is in the worst case $O(C N \log n)$ (for some appropriate choice of $n$).

\begin{figure*}[ht]
\centerline{\subfloat[Before execution]{\includegraphics[height=1.6in,width=2.3in]{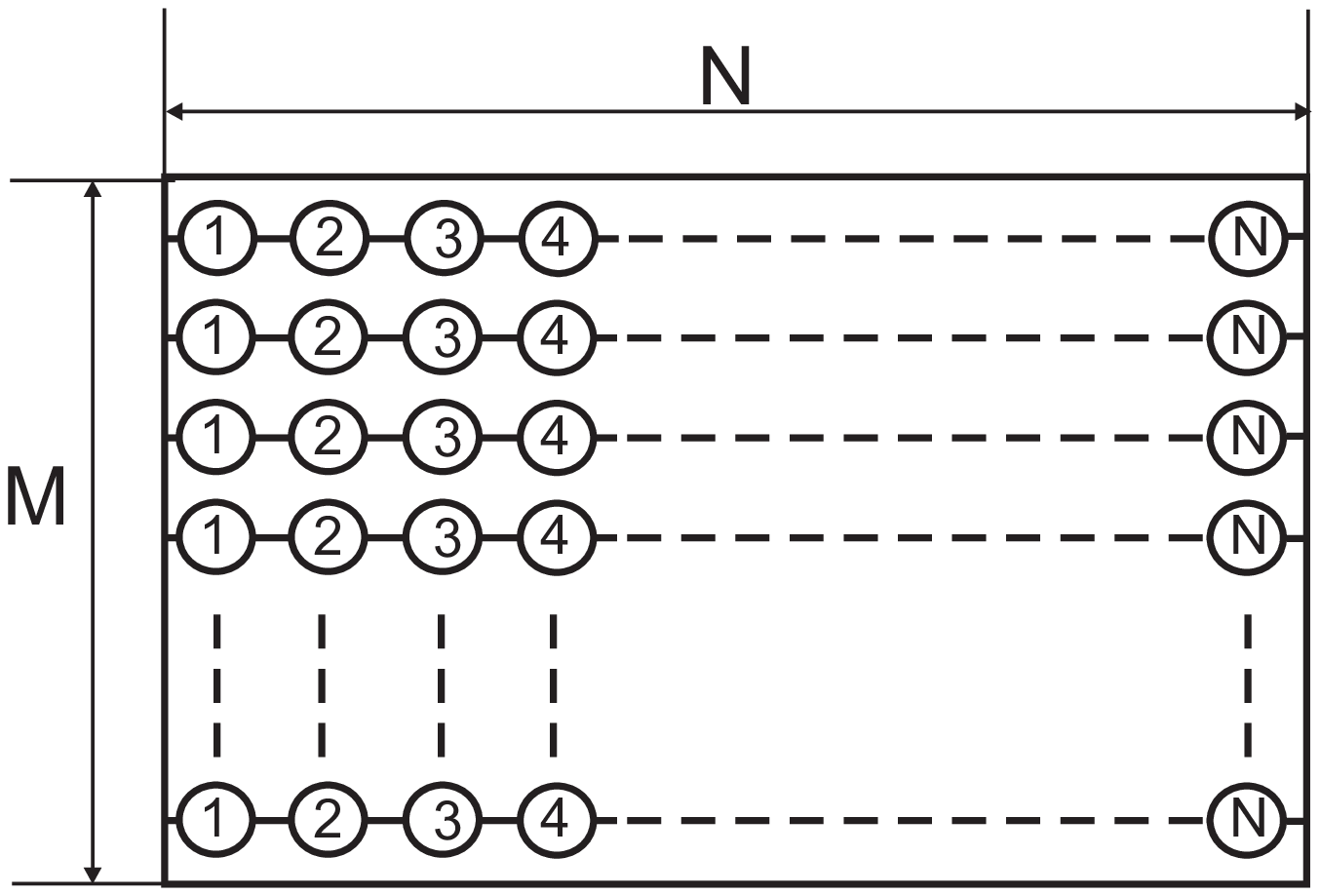}}
\hfil
\subfloat[After execution]{\includegraphics[height=1.9in,width=3.3in]{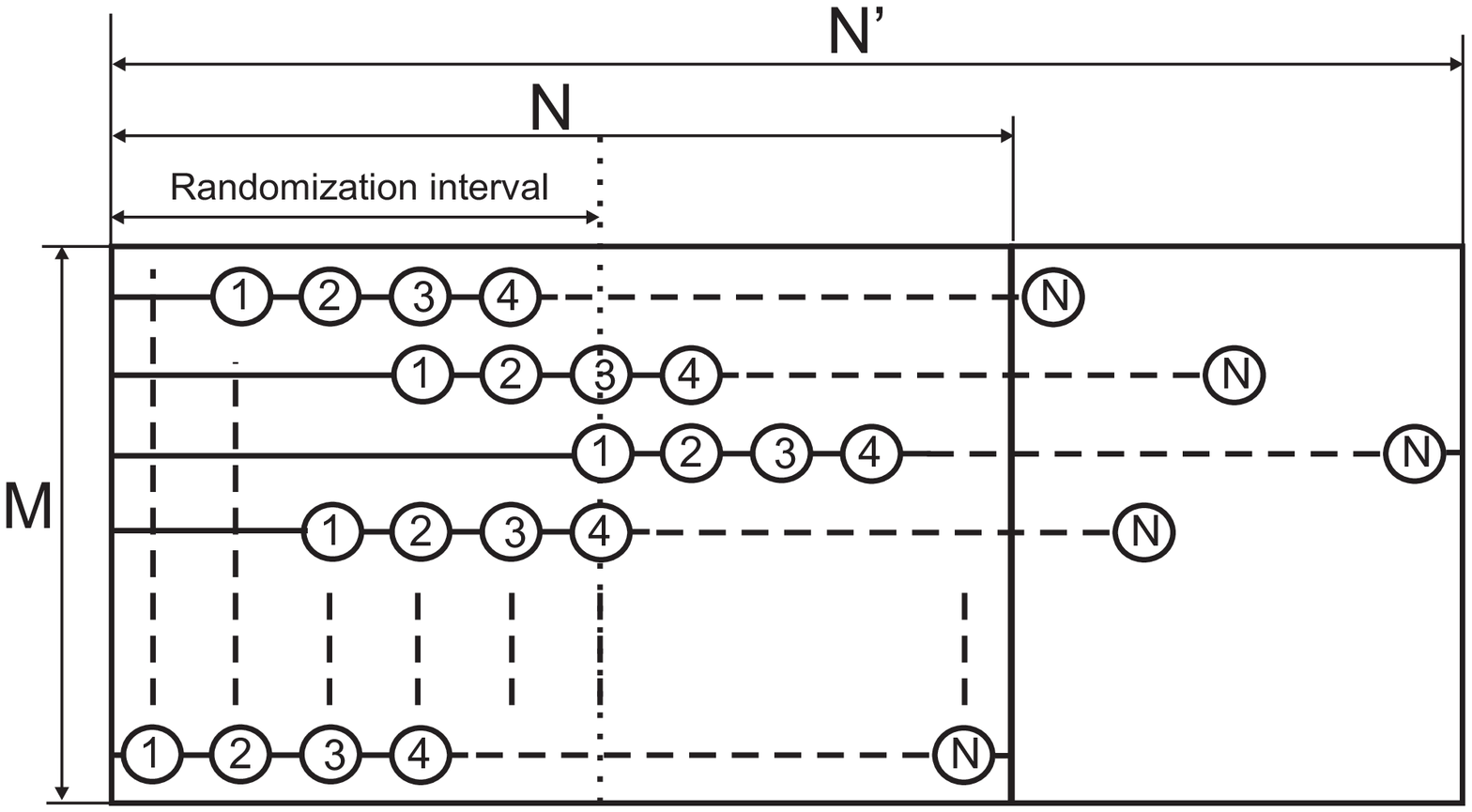}}
}
\caption{Execution window model for transactional memory}
\label{figure:window}
\end{figure*}

We have results that indicate
that we can obtain better bounds under certain circumstances in the window.
We present two randomized greedy algorithms
transactions are assigned priorities values,
such that for some random initial interval in the beginning of the window $W$
each transaction is in low priority mode and then after the random period expires
the transactions switch to high priority mode.
In high priority mode the transaction can only be aborted by other
high priority transactions.
The random initial delays have the property that the conflicting transactions
are shifted inside their window and their execution times may not coincide
(see the right part of Figure \ref{figure:window}).
The benefit is that conflicting transactions can execute at different time slots 
and potentially many conflicts are avoided.
The benefits become more apparent in scenarios where 
the conflicts are more frequent inside the same column transactions
and less frequent between different column transactions.

\paragraph{Contributions:}
We propose the contention measure $C$ within the window 
to allow more precise statements about the worst-case complexity bound of any contention management algorithm.
We give two window-based randomized greedy algorithms 
for the contention management in any execution window $W$.
Our first Algorithm {\sf Offline-Greedy} gives a schedule of length $O(C + N \log(MN))$ with high probability,
and improves on one-shot contention managers from a worst-case perspective.
The algorithm is offline in the sense that it uses explicitly the conflict graph 
of the transactions to resolve the conflicts. 
Our second Algorithm {\sf Online-Greedy} produces a schedule of length $O(C \log (MN)+ N \log^2(MN))$ 
with high probability, which is only a factor of $O(\log (MN))$ worse in comparison to  {\sf Offline-Greedy}.
The benefit of the online algorithm is that does not need to know the conflict graph of the transactions
to resolve the conflicts.
The online algorithm uses as a subroutine Algorithm {\sf RandomizedRounds} \cite{Schneider09}.
We also give a third algorithm {\sf Adaptive-Greedy} which is the adaptive version
of the previous algorithms which achieves similar worst-case performance
and adaptively guesses the value of the contention measure $C$. 

The technique we use for the analysis of these algorithms 
is similar to the one used by Leighton {\it et al.}~\cite{LMR94}
to analyze an online packet scheduling problem.
Moreover, one advantage of our algorithms is that if the conflicts in the window are bounded by $C \leq N \log MN$
then the upper bounds we have obtained is within poly-logarithmic factors from optimal,
since $N$ is a lower bound for the execution time. 
By finding window sizes in the program execution where $C$ is small compared to $N$ our algorithm
provide better bounds than previously known algorithms. 

We prove the existence of an algorithm based on dynamic programming to 
find in polynomial time the optimal decomposition for any arbitrary window $W$,
into sub-windows $W_1, \ldots, W_k$,
such the maximum contention {\em density} in each is the smallest possible.
The density simply measures how much larger is $C$ with respect to the number of transactions per thread.
By applying our greedy contention management algorithms in the sub-windows
we can obtain schedules which are asymptotically better than 
executing the algorithm in the whole window $W$.   

\paragraph{Outline of Paper:}
The rest of the paper is organized as follows: In Section \ref{section:related}, we discuss the related work.
We present the transactional memory model in Section \ref{section:model}. 
We present and formally analyze an offline randomized greedy algorithm in Section \ref{section:offlinealgorithm}. 
The online version is given in Section \ref{section:onlinealgorithm}. 
In Section \ref{section:adaptivealgorithm}, we describe the adaptive version of the aforementioned algorithms. 
We discuss the issues of window decomposition for the optimal window generation 
in Section \ref{section:decomposition}. Section \ref{section:conclusion} concludes the paper.

\section{Related Work}
\label{section:related}
Transactional Memory (TM) has been proposed in the early nineties as an alternative implementation of mutual exclusion
that avoids many of the drawbacks of locks 
(e.g., deadlock, reliance on the programmer to associate shared data with locks,
priority inversion, and failures of threads while holding locks) \cite{Herlihy93transactionalmemory}. 
A few years later the term Software Transactional Memory (STM) was suggested by Shavit and Touitou \cite{Shavit95}
and a so called Dynamic STM (DSTM) for dynamic data structures which uses a contention manager
as an independent module was proposed \cite{Her03}. 
DSTM is a practical obstruction-free STM system that seeks advice from the contention 
manager module to either wait or abort an transaction at the time of conflict.

Several contention managers have been proposed in the literature. 
Most of them have been assessed by specific benchmarks only and not analytically.
A comparison of contention managers based on different benchmarks can be found in \cite{Sch05, Sch04, Ramadan08, scherer05}.
They found out that the choice of the contention manager varies with the complexity of the considered benchmark.
The more detailed analysis of the performance of different contention managers
in complex benchmarks has recently been studied by Ansari {\it et al.}~\cite{Ansari09}.
From all the aforementioned references, 
it has been turned out that the coordination cost and the overhead involved in contention management is very high.

The first formal analysis of the performance of a contention manager was given by Guerraoui {\it et al.}~\cite{Gue05a}
which presented the {\sf Greedy} contention manager and proved that it achieves $O(s^{2})$
competitive ratio in comparison to the optimal off-line schedulers for $n$ concurrent transactions that share $s$ objects.
Later, Guerraoui {\it et al.}~\cite{Gue05b} studied the impact of transaction
failures on contention management and proved the $O(ks^{2})$
competitive ratio when some running transaction may abort $k$ times and then eventually commits.
Attiya {\it et al.}~\cite{Att06} improved the result of \cite{Gue05a} to $O(s)$,
and the result of \cite{Gue05b} to  $O(ks)$,
which are significant improvements over the competitive ratio of {\sf Greedy}. The also proved the matching lower bound of $\Omega(s)$ for the competitive ratio for deterministic work-conserving algorithms which schedule as many transactions as possible.

The complexity measures provided by the aforementioned 
studies are not satisfying as they are based on number of shared resources only. 
One can notice that number of shared resources in total is not really related 
to the actual conflicting transactions potentially encountered by an transaction.
Recently, Schneider and Wattenhofer \cite{Schneider09}
analyzed some of the issues related to the number of potential conflicts; and
presented a deterministic algorithm {\sf CommitBounds}
with competitive ratio $\Theta(s)$ for $n$ concurrent transactions using $s$ shared resources
and a randomized algorithm {\sf RandomizedRounds} with makespan
$O(C \log n)$, for the one-shot problem of a set of $M$ transactions in separate threads 
with $C$ conflicts (assuming unit delays for transactions),
with high probability (proportional to $1-n^{-1}$). 
Which means, {\sf RandomizedRounds} is only a factor of $\log n$ from optimal, with high probability,
for the case where $C < M$. 
However, if other transactions comes into play that are able to reduce 
the parallelism by a factor of $k$, 
the approximation of {\sf RandomizedRounds} also worsens by a factor of $k$.\
While previous studies showed that contention managers {\sf Polka} \cite{Sch05} and {\sf SizeMatters} \cite{Ramadan08} 
exhibits good overall performance for variety of benchmarks, 
this work showed that they may perform exponentially worse than {\sf RandomizedRounds} from a worst-case perspective.
\section{Execution Window Model} \label{section:model}

We consider a model that is based on a $M \times N$ execution window $W$
consisting of a set of transactions $W = \{(T_{11},\cdots, T_{1N}),$
$(T_{21},\cdots, T_{2N}),$
$\ldots,(T_{M1},\cdots,T_{MN})\}$
executed by the $M$ threads running on $M$ processors $P_1, \cdots, P_M$ 
where each thread issues $N$ transactions
in a sequence. 
For the simplicity of the analysis we assume that a single processor runs one thread only, i.e., 
in total at most $M$ threads are running concurrently. 
A thread running on processor $P_i$ executes transactions $T_{i1}, \cdots, T_{iN}$
one after the other and transaction $T_{ij}$ is executed as soon as $T_{i(j-1)}$ has completed or committed. 

Transactions share a set of objects $\Psi =\{O_1,\cdots,O_s\}$. 
Each transaction $T_{ij}$ may use at most $s$ different objects.
Each transaction is a sequence of actions that is either a 
read to some shared resource $O_l$, a write to some shared resource $O_k$, a commit, or an abort.
Concurrent write-write actions or read-write actions to shared objects
by two or more transactions cause conflicts between transactions.
Each transaction completes with a commit when each action performed without conflicts.
If conflicts occur then a transaction either aborts,
or it may commit and force to abort all other conflicting transactions.
In a {\em greedy} schedule,
if a transaction aborts then it immediately attempts to execute again until it commits.

Each transaction $T_{ij}$ has execution time duration $\tau_{ij}$
which is greater than 0.
Here, for simplicity, 
we assume that $\tau_{ij} = 1$, i.e., each transaction needs one time unit to execute. 
We also assume that the execution of the transactions starts at time 0 
and the execution time advances synchronously for all threads step by step.
We also assume that all transactions inside the execution window are correct,
i.e., there are no faulty transactions.
Our results can be extended by relaxing these assumptions.

The {\em makespan} of a schedule for a set of transactions $\Gamma$ 
is defined as the duration from the start of the schedule, i.e., 
the time when some transaction $T_{ij}\in \Gamma$ is available for scheduling, 
until all transactions in $\Gamma$ have committed.
The makespan of the transaction scheduling algorithm for the sequences of transactions 
can be compared to the makespan of an optimal off-line scheduling algorithm, 
which is denoted by OPT. 
We evaluate the efficiency of our new contention management algorithms
by comparing their makespan with the makespan of the optimal off-line scheduler. 

\begin{definition}[Competitive Ratio] 
The competitive ratio of the combination of $(A, \Gamma)$ 
for a contention management algorithm $A$ under a set of jobs $\Gamma$ 
is defined as $$CR(A, \Gamma) = \frac{makespan(A, \Gamma)}{makespan(\textsc{OPT}, \Gamma)}.$$
\end{definition}


\paragraph{Conflict Graph:} 
For a set of transactions $V \subseteq \Gamma$, 
we use the notion of conflict graph $G=(V,E)$.
The neighbors of a transaction $T$ in the conflict graph are denoted by $N_{T}$ 
and represent all transactions that have a conflict with $T$ in $G$. 
The degree $d_{T}$ of $T$ in the graph corresponds 
to the number of its neighbors in the conflict graph, i.e., $d_{T}=|N_{T}|$.
Note $d_{T} \leq |V|$.
The congestion $C$ of the window $W$ is the largest degree of the conflict graph $G' = (W,E')$,
which consists of all the transactions in the window. 

\section{Offline Algorithm} \label{section:offlinealgorithm}

We present Algorithm {\sf Offline-Greedy} (Algorithm \ref{algorithm:offline-greedy})
which is an offline greedy contention resolution algorithm
that uses the conflict graph explicitly to resolve
conflicts of transactions.
First, we divide the time into frames of duration $\Phi = \Theta(\ln(MM))$.
Then, each thread $P_i$ is assigned an initial time period consisting of $R_i$ frames
(with total duration $R_i \cdot \Phi$),
where $R_i$  is chosen randomly, independently
and uniformly, from the range $\left[0, \alpha -1\right]$,
where $\alpha = C / \ln(MN)$.
Each transaction has two priorities: $low$ or $high$ associated with them.
Transaction $T_{ij}$ is initially in low priority.
Transaction $T_{ij}$ switches
to high priority (or normal priority) in the first time step of frame $F_{ij} = R_i + (j-1)$
and remains in high priority thereafter until it commits.
The priorities are used to resolve conflicts.
A high priority transaction may only be aborted by another high priority transaction.
A low priority transaction is always aborted if it conflicts with a high priority transaction.

Let $G_{t}$ denote the conflict graph of transactions at time $t$
where each transaction corresponds to a node and two transactions are connected
with an edge if they conflict in at least one shared resource.
Note that the maximum degree of $G_{t}$ is bounded by $C$
for the transactions in window $W$.
At each time step $t$ we select to commit a maximal independent set of transactions in $G_{t}$.
We first select a maximal independent set $I_H$ of high priority transactions
then remove this set and its neighbors from $G_{t}$,
and then select a maximal independent set $I_L$
of low priority transactions from the remaining conflict graph.
The transactions that commit are $I_H \cup I_L$.

The intuition behind the algorithm is as follows:
Consider a thread $i$ and its first transaction in the window $T_{i1}$.
According to the algorithm, $T_{i1}$ becomes high priority in the beginning
of frame $F_{i1}$.
Because $R_i$ is chosen at random among $\alpha C/\ln(MN)$ positions
it is expected that $T_{i1}$ will conflict with at most $O(\ln(MN))$
transactions which become simultaneously
high priority in the same time frame (in $F_{ij}$).
Since the duration of a time frame is $\Phi = \Theta(\ln(MN))$,
transaction $T_{i1}$ and all its high priority conflicting transactions will
be able to commit by the end of time frame $Y_i$,
using the conflict resolution graph.
The initial randomization period of $R_i \cdot \Phi$ frames
will have the same effect to the
remaining transactions of the thread $i$,
which will also commit within their chosen high priority frames.

\begin{algorithm}[t]
{\small
\KwIn{A $M \times N$ window $W$ of transactions with $M$ threads each with $N$ transactions,
where $C$ is the maximum number of transactions that a transaction can conflict within the window\;}
\KwOut{A greedy execution schedule for the window of transactions $W$\;}
\BlankLine
Divide time into time frames of duration $\Phi = 1 + (e^2 + 2) \ln(MN)$\;
Each thread $P_i$ chooses a random number $R_i \in [0, \alpha -1]$ for $\alpha = C / \ln(MN)$\;
\ForEach{time step $t = 0,1,2,\ldots$}{
\textbf{Phase 1: Priority Assignment}\;
\ForEach{transaction $T_{ij}$}{
$F_{ij} \leftarrow R_i + (j-1)$\;
\eIf{$t < F_{ij} \cdot \Phi$}
{
$Priority(T_{ij}) \gets Low$\;
}{
$Priority(T_{ij}) \gets High$\;
}}
\textbf{Phase 2: Conflict Resolution}\;
\Begin{
Let $G_{t}$ be the conflict graph at time $t$\;
Compute $G_{t}^H$ and $G_{t}^L$, the subgraphs of $G_{t}$ induced by high and low priority nodes, respectively\;
Compute $I_H \gets I(G_{t}^H)$, maximal independent set of nodes in graph $G_{t}^H$\;
$Q \gets $ low priority nodes adjacent to nodes in $I_H$\;
Compute $I_L = I(G_{t}^L - Q)$,
maximal independent set of nodes in graph $G_{t}^L$ after removing $Q$ nodes\;
Commit $I_H \cup I_L$\;
}
}
\caption{{\sf Offline-Greedy}}
\label{algorithm:offline-greedy}
}
\end{algorithm}

\subsection{Analysis of Offline Algorithm}
We study two classic efficiency measures for the analysis of our contention management algorithm:
(a) the makespan, which gives the total time to complete all the $MN$ transactions
in the window; and
(b) the response time of the system, which gives how much time a transaction takes to commit.

According to the algorithm, when a transaction $T_{ij}$ enters into the system,
it will be in low priority until $F_{ij}$ starts. As soon as $F_{ij}$ starts,
it will enter into its respective frame and begin executing in high priority.
Let $A$ denote the set of conflicting transactions with $T_{ij}$.
Let $A' \subseteq A$ denote the subset of conflicting transactions of $T_{ij}$
which become high priority during frame $F_{ij}$ (simultaneously with $T_{ij}$).

\begin{lemma}
\label{lemma:commit-frame}
If $|A'| \leq \Phi - 1$ then transaction $T_{ij}$ will commit in frame $F_{ij}$.
\end{lemma}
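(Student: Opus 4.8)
The plan is to prove the statement by a charging argument over the $\Phi$ time steps that make up frame $F_{ij}$. First I would record the two structural facts the argument rests on: (i) frame $F_{ij}$ occupies exactly $\Phi$ consecutive time steps, namely $t = F_{ij}\cdot\Phi, F_{ij}\cdot\Phi+1, \ldots, (F_{ij}+1)\cdot\Phi-1$; and (ii) throughout every one of these steps $T_{ij}$ is in high priority, by the priority-assignment rule of Algorithm {\sf Offline-Greedy} (since $t \ge F_{ij}\cdot\Phi$), until it commits. The goal is then exactly to rule out that $T_{ij}$ stays uncommitted for all $\Phi$ of these steps.

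Next I would set up the charging. Fix a step $t$ inside frame $F_{ij}$ at which $T_{ij}$ has not yet committed. Then $T_{ij}$ is a node of the high-priority conflict graph $G_{t}^H$, hence it is considered when the algorithm builds the maximal independent set $I_H = I(G_{t}^H)$. Either $T_{ij}\in I_H$, in which case $T_{ij}\in I_H\cup I_L$ and it commits at step $t$; or, by maximality of $I_H$, some neighbor $u$ of $T_{ij}$ with $u\in G_{t}^H$ lies in $I_H$, and this $u$ commits at step $t$. In the latter case I charge step $t$ to $u$. Since a committed transaction leaves the system and never reappears in any later conflict graph, each such $u$ receives at most one charge. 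Therefore the number of steps of frame $F_{ij}$ at which $T_{ij}$ is alive but does not commit is at most the number of \emph{distinct} high-priority neighbors of $T_{ij}$ that commit during frame $F_{ij}$.

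It then remains to bound that count by $\Phi-1$. A neighbor $T_{kl}\in A$ can be high priority at a step $t$ of frame $F_{ij}$ only if $F_{kl}\cdot\Phi\le t$, and since $t\le (F_{ij}+1)\cdot\Phi-1$ this forces $F_{kl}\le F_{ij}$. The neighbors with $F_{kl}=F_{ij}$ are, by definition, exactly $A'$, and so contribute at most $|A'|\le\Phi-1$ commits. The neighbors with $F_{kl}<F_{ij}$ became high priority in a strictly earlier frame; I would argue that each of these has already committed before frame $F_{ij}$ begins, using the inductive structure of the surrounding analysis (induction on the frame index, base case being frame $0$ where there are no earlier frames and the counting argument above applies verbatim, while in the analysis one conditions on the high-probability event that every transaction satisfies the hypothesis $|A'|\le\Phi-1$ and hence commits within its own frame). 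Combining the two cases, at most $\Phi-1$ distinct high-priority neighbors of $T_{ij}$ commit during the $\Phi$ steps of frame $F_{ij}$, so $T_{ij}$ has at most $\Phi-1$ non-committing steps and must commit in frame $F_{ij}$.

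The delicate point — and the one I expect to be the real obstacle — is precisely the treatment of the conflicting transactions that became high priority in earlier frames: the local hypothesis $|A'|\le\Phi-1$ by itself says nothing about how many such ``stale'' high-priority neighbors are still alive when frame $F_{ij}$ starts, so the statement cannot be established in complete isolation and must be threaded through the global induction together with the union bound that guarantees $|A'|\le\Phi-1$ for all $MN$ transactions at once. Everything else — the $\Phi$-step length of the frame and the one-charge-per-committed-neighbor bookkeeping — is routine.
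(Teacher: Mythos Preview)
Your argument is the same as the paper's: at every step of frame $F_{ij}$ at which $T_{ij}$ fails to commit, maximality of $I_H$ forces some high-priority neighbor of $T_{ij}$ to commit, and since each such neighbor commits at most once, at most $|A'|\le\Phi-1$ of the $\Phi$ steps can be ``wasted.'' The paper's proof is exactly this counting argument, stated in two sentences.

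The one difference is that you take seriously the possibility of high-priority neighbors with $F_{kl}<F_{ij}$ that are still alive when frame $F_{ij}$ starts, and you propose to eliminate them by an induction on the frame index threaded through the global high-probability event. The paper simply asserts that the committing neighbor lies in $A'$ and moves on, never mentioning earlier-frame neighbors. Your reading is the careful one: the lemma as stated does not by itself exclude stale high-priority conflicts, and your inductive fix (conditioning on the event that every transaction has $|A'|\le\Phi-1$ and hence commits in its own frame) is the natural way to close that gap. So your proof is not merely equivalent to the paper's; it is a strictly more rigorous version of the same idea.
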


\begin{proof}
Due to the use of the high priority independent sets in the conflict graph $G_t$,
if in time $t$ during frame $F_{ij}$
transaction $T_{ij}$ does not commit, then some
conflicting transaction in $A'$ must commit.
Since there are at most $\Phi - 1$ high priority conflicting transactions,
and the length of the frame $F_{ij}$ is at most $\Phi$,
$T_{ij}$ will commit by the end of frame $F_{ij}$.
\end{proof}

We show next that it is unlikely that $|A'| > \Phi - 1$.
We will use the following version of the Chernoff bound:

\begin{lemma}[Chernoff bound 1]
\label{lemma:chernoff1}
Let $X_1, X_2, \ldots, X_n$ be independent Poisson trials
such that, for $1 \leq i \leq n$, ${\bf Pr}(X_i = 1) = pr_i$,
where $0 < pr_i < 1$.
Then, for $X = \sum_{i=1}^{n} X_i$,
$\mu = {\bf E}[X] = \sum_{i=1}^{n} pr_i$,
and any $\delta > e^2$,
${\bf Pr}(X > \delta \mu) < e^{-\delta \mu}.$
\end{lemma}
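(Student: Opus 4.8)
The plan is to use the standard exponential-moment (Chernoff--Bernstein) method, taking advantage of the fact that the hypothesis $\delta > e^2$ is strong enough that a fixed value of the free parameter already works, so no optimization over that parameter is needed. First I would fix a parameter $t > 0$ and apply Markov's inequality to the nonnegative random variable $e^{tX}$. Since $t > 0$, the event $\{X > \delta\mu\}$ is contained in $\{e^{tX} \ge e^{t\delta\mu}\}$, hence
$${\bf Pr}(X > \delta\mu) \;\le\; \frac{{\bf E}[e^{tX}]}{e^{t\delta\mu}}.$$

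Next I would exploit independence of the $X_i$ to factor the moment generating function: ${\bf E}[e^{tX}] = \prod_{i=1}^{n} {\bf E}[e^{tX_i}] = \prod_{i=1}^{n}\bigl(1 + pr_i(e^t - 1)\bigr)$, since each $X_i$ is a Bernoulli$(pr_i)$ variable. Applying the elementary inequality $1 + x \le e^x$ with $x = pr_i(e^t - 1) > 0$ gives ${\bf E}[e^{tX}] \le \prod_{i=1}^{n} e^{pr_i(e^t-1)} = e^{\mu(e^t - 1)}$. Combining this with the previous display yields
$${\bf Pr}(X > \delta\mu) \;\le\; e^{\mu(e^t - 1) - t\delta\mu} \;=\; \Bigl(e^{\,e^t - 1 - t\delta}\Bigr)^{\mu}.$$

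Finally I would set $t = 2$, so the exponent becomes $e^2 - 1 - 2\delta$, and the claim reduces to checking $e^2 - 1 - 2\delta < -\delta$, that is, $e^2 - 1 < \delta$, which holds because $\delta > e^2 > e^2 - 1$. Since $\mu = \sum_{i} pr_i > 0$, raising the strict inequality $e^{\,e^2 - 1 - 2\delta} < e^{-\delta}$ to the power $\mu$ gives ${\bf Pr}(X > \delta\mu) < e^{-\delta\mu}$, as required. The only point requiring care is the choice of constant: the usual optimal choice $t = \ln(1+\delta)$ would give the sharper $\bigl(e^\delta/(1+\delta)^{1+\delta}\bigr)^{\mu}$ bound, but here the cruder fixed choice $t = 2$ already suffices, and it does so precisely because $\delta$ is bounded below by $e^2$ — for small $\delta$ no constant $t$ would work. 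One should also verify that the inequalities used ($1 + x \le e^x$ with $x > 0$, and $e^2 - 1 < \delta$) are strict, so that the conclusion is indeed the strict inequality stated.
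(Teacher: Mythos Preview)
Your argument is correct. The exponential-moment method with the fixed choice $t=2$ works exactly as you describe: from ${\bf Pr}(X>\delta\mu)\le e^{\mu(e^t-1-t\delta)}$ at $t=2$ one needs $e^2-1-2\delta<-\delta$, i.e.\ $\delta>e^2-1$, which is guaranteed by the hypothesis $\delta>e^2$. Composing the (non-strict) Markov bound with this strict comparison yields the strict inequality in the statement.

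As for the comparison you were asked about: the paper does not actually supply a proof of this lemma. It is quoted as a standard Chernoff bound and used as a black box in the proof of Lemma~\ref{lemma:one-transaction}. So there is no ``paper's proof'' to compare against; your derivation simply fills in the omitted justification, and does so correctly. If anything, one could note that the more common textbook route optimizes $t=\ln\delta$ (or $t=\ln(1+\delta)$) and then argues $(e/\delta)^{\delta}<e^{-\delta}$ for $\delta>e^2$, but your fixed-$t$ shortcut is cleaner here and exploits the hypothesis $\delta>e^2$ in the most direct way.
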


\begin{lemma}
\label{lemma:one-transaction}
$|A'| > \Phi - 1$ with probability at most $(1/MN)^2$.
\end{lemma}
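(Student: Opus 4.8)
The plan is to bound $|A'|$ by a Chernoff argument after rewriting it as a sum of independent indicators. First I would condition on the value of $R_i$. Transaction $T_{ij}$ becomes high priority exactly in frame $F_{ij} = R_i + (j-1)$, and a transaction $T_{kl}$ of another thread $k$ becomes high priority in frame $F_{kl} = R_k + (l-1)$; hence $T_{kl} \in A'$ if and only if $R_k = R_i + (j-l)$. Since $R_k$ is uniform over the $\alpha = C/\ln(MN)$ values $\{0,\dots,\alpha-1\}$ and is independent of $R_i$ and of the other threads' choices, this event has probability at most $1/\alpha = \ln(MN)/C$ (and probability $0$ when $R_i + j - l$ lies outside the range). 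Transactions of thread $i$ itself contribute nothing to $A'$: they occupy distinct frames and, being issued sequentially, never coexist in any $G_t$, so effectively $A$ consists of transactions from the other $M-1$ threads.

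Next I would eliminate the dependence within a single thread. For a fixed thread $k \neq i$ the single value $R_k$ fixes all of its frames $F_{kl}$ simultaneously, so at most one transaction of thread $k$ can fall in frame $F_{ij}$, i.e.\ at most one transaction of thread $k$ lies in $A'$. Let $Y_k$ be the indicator that some transaction of thread $k$ conflicting with $T_{ij}$ belongs to $A'$, and let $c_k$ be the number of transactions of thread $k$ conflicting with $T_{ij}$. Then $|A'| = \sum_{k \neq i} Y_k$, the $Y_k$ are mutually independent (distinct threads draw independent $R_k$), and a union bound over the $c_k$ pairwise disjoint target values gives ${\bf Pr}(Y_k = 1) \le c_k/\alpha$. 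Since $\sum_{k} c_k = |A| \le C$, we obtain $\mu := {\bf E}[|A'|] = \sum_k {\bf Pr}(Y_k = 1) \le C/\alpha = \ln(MN)$.

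Finally I would invoke Lemma~\ref{lemma:chernoff1} with $\delta = (\Phi-1)/\mu$. Because $\Phi - 1 = (e^2+2)\ln(MN)$ and $\mu \le \ln(MN)$, we have $\delta \ge e^2 + 2 > e^2$, so the bound applies and yields ${\bf Pr}(|A'| > \Phi - 1) = {\bf Pr}(|A'| > \delta\mu) < e^{-\delta\mu} = e^{-(\Phi-1)} = (MN)^{-(e^2+2)} \le (1/MN)^2$. As this holds for every fixed value of $R_i$, it holds unconditionally.

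The main obstacle I anticipate is the bookkeeping around independence: a conflicting transaction can share its thread with several other conflicting transactions, so the naive per-transaction indicators are not independent, and one must regroup by thread (as above) to recover a genuine sum of independent Poisson trials. A secondary subtlety is that Lemma~\ref{lemma:chernoff1} is stated for the exact mean $\mu$, so $\delta$ must be taken relative to $\mu$, not to the upper bound $\ln(MN)$; defining $\delta = (\Phi-1)/\mu$ resolves this. One should also dispatch the degenerate cases — threads with $c_k = 0$ contribute trivially, and if $C < \ln(MN)$ the randomization range (and hence the statement) is taken in its adjusted form so that $1/\alpha \le \ln(MN)/C$ still holds.
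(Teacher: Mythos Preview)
Your proof is correct and follows the same overall strategy as the paper: condition on $R_i$, group the conflicting transactions by thread to obtain independent indicators $Y_k$, bound ${\bf E}\bigl[\sum_k Y_k\bigr] \le C/\alpha = \ln(MN)$, and apply the Chernoff bound of Lemma~\ref{lemma:chernoff1}.

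The paper adds one structural step you omit: it partitions the threads into three classes $Q_0,Q_1,Q_2$ and splits off the ``heavy'' threads (those with $|A_k| \ge \alpha$) into $Q_2$, bounding their total contribution deterministically by $|Q_2| \le C/\alpha = \ln(MN)$; Chernoff is then applied only to the $Q_1$ threads, for which $0 < pr_k \le |A_k|/\alpha < 1$ strictly, matching the literal hypothesis of Lemma~\ref{lemma:chernoff1}. This is also the reason for the constant in $\Phi - 1 = (e^2+2)\ln(MN)$: roughly $(e^2+1)\ln(MN)$ comes from the Chernoff tail on $Z_1$ and the extra $\ln(MN)$ absorbs $Z_2$. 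Your direct application works because the Chernoff inequality in fact remains valid when some $pr_k \in \{0,1\}$, and your device of taking $\delta = (\Phi-1)/\mu$ (so that $\delta\mu$ is fixed) cleanly avoids any issue when $\mu$ happens to be much smaller than $\ln(MN)$; the paper's version, which fixes $\delta = e^2+1$ and then asserts $e^{-(e^2+1)\mu} < e^{-2\ln(MN)}$, tacitly needs a lower bound on $\mu$ that your formulation does not.
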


\begin{proof}
Let $A_k \subseteq A$, where $1 \leq k \leq M$,
denote the set of transactions of thread $P_k$ that conflict
with transaction $T_ij$.
We partition the threads $P_1, \ldots, P_M$
into $3$ classes
$Q_0$, $Q_1$, and $Q_3$,
such that:
\begin{itemize}
\item
$Q_0$ contains every thread $P_k$ which either $|A_k| = 0$,
or $|A_k| > 0$ but the positions of the transactions in $A_k$
are such that it is impossible to overlap with $F_{ij}$
for any random intervals $R_i$ and $R_k$.
\item
$Q_1$ contains every thread $P_k$ with $0 < |A_k| < \alpha$,
and at least one of the transactions in $A_k$ is positioned
so that it is possible to overlap with with frame $F_{ij}$
for some choices of the random intervals $R_i$ and $R_k$.
\item
$Q_2$ contains every thread $P_k$ with $\alpha \leq |A_k|$.
Note that $|Q_2| \leq C/\alpha = \ln(NM)$.
\end{itemize}

Let $Y_k$ be a random binary variable,
such that $Y_k = 1$ if in thread $P_k$ any of the transactions in $A_k$
becomes high priority in $F_{ij}$ (same frame with $T_{ij}$),
and $Y_k = 0$ otherwise.
Let $Y = \sum_{k=1}^{M} Y_k$.
Note that $|A'| = Y$.
Denote $pr_k = {\bf Pr}(Y_k = 1)$.
We can write $Y = Z_0 + Z_1 + Z_2$,
where $Z_\ell = \sum_{P_k \in Q_\ell} Y_k$, where $0 \leq \ell \leq 2$.
Clearly, $Z_0 = 0$.
and $Z_2 \leq |Q_2| \leq \ln(MN)$.

Recall that for each thread $P_k$ there is a random initial interval
with $R_k$ frames,
where $R_k$ is chosen uniformly at random in $[0,\alpha-1]$.
Therefore, for each $P_k \in Q_1$,
$0 < pr_k \leq |A_k| / \alpha < 1$,
since there are $|A_k| < \alpha$ conflicting transactions
in $A_i$ and there are at least $\alpha$ random choices
for the relative position of transaction $T_{ij}$.
Consequently,
$$\mu = {\bf E}[Z_1] = \sum_{P_k \in Z_1} pr_k
\leq \sum_{P_k \in Z_1} \frac {|A_k|} {\alpha}
= \frac{1} {\alpha} \cdot \sum_{P_k \in Z_1} |A_k|
\leq \frac C \alpha
\leq {\ln(MN)}.$$
By applying the Chernoff bound of Lemma \ref{lemma:chernoff1}
we obtain that
$${\bf Pr}(Z_1 > (e^2+1) \mu) < e^{-(e^2+1) \mu} < e^{-2 \ln(MN)} = (MN)^{-2}.$$
Since $Y = Z_0 + Z_1 + Z_2$, and $Z_2 \leq \ln(MN)$,
we obtain ${\bf Pr}(|A'| = Y > (e^2 + 2) \mu = \Phi - 1) < (MN)^{-2}$, as needed.
\end{proof}

\begin{theorem}[makespan of {\sf Offline-Greedy}]
\label{theorem:offline-makespan}
Algorithm {\sf Offline-Greedy} produces a schedule of length $O(C + N \log(MN))$
with probability at least $1-\frac{1}{MN}$.
\end{theorem}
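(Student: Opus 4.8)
The plan is to bound the makespan by accounting separately for the initial randomization delay and for the time each thread then needs to clear its $N$ transactions. First I would observe that the last frame in which any transaction can even begin in high priority mode is frame $F_{ij} = R_i + (j-1) \le (\alpha - 1) + (N - 1)$, since $R_i \le \alpha - 1$ and $j \le N$. Thus every transaction has switched to high priority by frame $\alpha + N - 1$. Combining the union bound over all $MN$ transactions with Lemma~\ref{lemma:one-transaction} gives that, with probability at least $1 - MN \cdot (MN)^{-2} = 1 - 1/(MN)$, \emph{every} transaction $T_{ij}$ satisfies $|A'| \le \Phi - 1$ simultaneously. On that event, Lemma~\ref{lemma:commit-frame} applies to every transaction, so $T_{ij}$ commits during frame $F_{ij}$.

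Next I would convert this per-transaction guarantee into a global makespan bound. Since each $T_{ij}$ commits within frame $F_{ij} = R_i + j - 1$, and the frames are processed in order, the whole schedule finishes by the end of frame $\max_{i,j} F_{ij} \le (\alpha - 1) + (N - 1)$. Each frame has duration $\Phi = 1 + (e^2+2)\ln(MN) = \Theta(\log(MN))$, so the total number of time steps is at most $(\alpha + N) \cdot \Phi$. Substituting $\alpha = C/\ln(MN)$ gives $(C/\ln(MN) + N)\cdot \Theta(\ln(MN)) = O(C + N\log(MN))$, which is the claimed bound. A small point to check here: one must make sure that a transaction $T_{ij}$ that commits early (before the end of its frame) does not force the thread's schedule out of sync — but since $T_{i(j+1)}$ only needs $T_{ij}$ committed and its own high-priority frame $F_{i(j+1)} = F_{ij} + 1$ starts right after, the "commit within frame $F_{ij}$" property for all $j$ is exactly what is needed, and it chains cleanly up the thread.

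The main obstacle, and the step deserving the most care, is verifying that Lemma~\ref{lemma:commit-frame}'s hypothesis can be asserted for all $MN$ transactions \emph{on a single high-probability event}. Lemma~\ref{lemma:one-transaction} bounds the failure probability for one fixed transaction at $(MN)^{-2}$; the union bound over $MN$ transactions yields failure probability at most $1/(MN)$, which is precisely why the frame length was chosen with the constant $e^2+2$ and why $\alpha = C/\ln(MN)$ — these make $\mu \le \ln(MN)$ and push the Chernoff tail below $(MN)^{-2}$. I would also note that the events for different transactions need not be independent; the union bound requires nothing of the sort. Once the global event is fixed, the rest is the deterministic counting argument above. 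I would close by restating that on the complementary event (probability $\le 1/(MN)$) we can fall back on the trivial $\min(CN, MN)$ bound if a fully unconditional statement were wanted, though the theorem as stated only claims the high-probability bound.
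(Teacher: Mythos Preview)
Your proposal is correct and follows essentially the same route as the paper: combine Lemma~\ref{lemma:one-transaction} with a union bound over all $MN$ transactions to get failure probability at most $1/(MN)$, invoke Lemma~\ref{lemma:commit-frame} on the good event so every $T_{ij}$ commits within frame $F_{ij}$, and then bound the makespan by $(\alpha+N)\Phi = O(C + N\log(MN))$. Your write-up is in fact more careful than the paper's --- you explicitly bound $\max_{i,j} F_{ij}$, check the per-thread chaining, and note that independence across transactions is not needed for the union bound --- but the underlying argument is identical.
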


\begin{proof}
From Lemmas \ref{lemma:commit-frame} and \ref{lemma:one-transaction}
the frame length $\Phi$ does not suffice to
commit transaction $T_{ij}$ within frame $F_{ij}$ (bad event)
with probability at most ${NM}^{-2}$.
Considering all the $MN$
transactions in the window a bad event occurs
with probability at most $MN \cdot MN^{-2} = MN^{-1}$.
Thus, with probability at least $1 - MN^{-1}$ all transactions
will commit with the frames that they become high priority.
The total time used by any thread is bounded by $(\alpha + N)\cdot \Phi = O(C + N \log(MN))$.
\end{proof}

Since $N$ is a lower bound for the makespan,
Theorem \ref{theorem:offline-makespan} implies
the following competitive ratio for the $M \times N$ window $W$:

\begin{corollary}[competitive ratio of {\sf Offline-Greedy}]
When $ C \leq N \cdot \ln(MN)$,
$CR(\mbox{{\sf Offline-Greedy}}, W) = O(\log(NM))$,
with high probability.
\end{corollary}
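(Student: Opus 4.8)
The plan is to derive the competitive ratio directly from the makespan bound in Theorem~\ref{theorem:offline-makespan} together with the trivial lower bound on the optimal makespan. First I would observe that in the execution window model every thread $P_i$ must execute its $N$ transactions $T_{i1}, \ldots, T_{iN}$ sequentially, since $T_{ij}$ only starts after $T_{i(j-1)}$ commits and each transaction needs one time unit. Hence any schedule, including the optimal offline one, needs at least $N$ time steps to finish, i.e.\ $\mathit{makespan}(\textsc{OPT}, W) \geq N$.

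Next I would invoke Theorem~\ref{theorem:offline-makespan}, which guarantees that with probability at least $1 - \frac{1}{MN}$ the schedule produced by {\sf Offline-Greedy} has length $O(C + N \log(MN))$. Under the hypothesis $C \leq N \ln(MN)$, the term $C$ is absorbed into $N \log(MN)$, so the makespan is $O(N \log(MN))$ with high probability. Dividing the high-probability upper bound on $\mathit{makespan}(\mbox{{\sf Offline-Greedy}}, W)$ by the lower bound $\mathit{makespan}(\textsc{OPT}, W) \geq N$ then yields
\[
CR(\mbox{{\sf Offline-Greedy}}, W) = \frac{O(N \log(MN))}{N} = O(\log(MN)) = O(\log(NM)),
\]
which holds with the same high probability as the makespan bound, namely at least $1 - \frac{1}{MN}$.

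There is essentially no hard step here: the statement is a corollary in the literal sense, and the only things being used are (i) the sequential structure of each thread giving the $N$ lower bound, and (ii) the case assumption $C \leq N\ln(MN)$ that collapses the two-term makespan into a single term. The one point worth stating carefully is that the competitive ratio bound is itself a high-probability statement, inherited from Theorem~\ref{theorem:offline-makespan}, rather than a deterministic one; I would make sure the phrase ``with high probability'' in the corollary is understood to mean probability at least $1 - \frac{1}{MN}$, consistent with the makespan theorem.
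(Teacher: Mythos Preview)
Your proposal is correct and matches the paper's own reasoning exactly: the paper simply notes that $N$ is a lower bound for the makespan and then invokes Theorem~\ref{theorem:offline-makespan}, which under the hypothesis $C \leq N\ln(MN)$ collapses to $O(N\log(MN))$, yielding the $O(\log(MN))$ competitive ratio with high probability. There is nothing further to add.
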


The following corollary follows immediately from
Lemmas \ref{lemma:commit-frame} and \ref{lemma:one-transaction}:

\begin{corollary}[response time of {\sf Offline-Greedy}]
\label{theorem:offline-response}
The time that a transaction $T_{ij}$ needs to commit from the moment it starts
is $O(C + j \cdot\log(MN))$ with probability at least $1-\frac{1}{(MN)^2}$.
\end{corollary}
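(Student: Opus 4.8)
The plan is to reduce the response-time bound to an upper bound on the \emph{commit time} of $T_{ij}$ measured from time $0$. Write $s_{ij}$ for the time at which thread $P_i$ begins attempting $T_{ij}$ (so $s_{i1}=0$, and $s_{ij}$ equals the commit time of $T_{i(j-1)}$ for $j>1$) and $c_{ij}$ for its commit time. Since $s_{ij}\ge 0$, the quantity we must bound, $c_{ij}-s_{ij}$, is at most $c_{ij}$; hence it suffices to show that $c_{ij}=O(C+j\log(MN))$ with the claimed probability.

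First I would record the elementary arithmetic on the frame index. Here $R_i$ is drawn from $[0,\alpha-1]$ with $\alpha=C/\ln(MN)$, so $F_{ij}=R_i+(j-1)<C/\ln(MN)+j$, and frame $F_{ij}$ ends at time $(F_{ij}+1)\Phi$ with $\Phi=1+(e^2+2)\ln(MN)=\Theta(\log(MN))$. Hence $(F_{ij}+1)\Phi<\big(C/\ln(MN)+j\big)\Phi=O(C+j\log(MN))$, using $\ln(MN)\ge 1$. So it is enough to argue that, with high probability, $T_{ij}$ has committed by the end of its own high-priority frame $F_{ij}$, i.e. $c_{ij}\le (F_{ij}+1)\Phi$.

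Next I would invoke Lemmas~\ref{lemma:commit-frame} and~\ref{lemma:one-transaction}. By Lemma~\ref{lemma:one-transaction}, the number $|A'|$ of conflicting transactions that become high priority simultaneously with $T_{ij}$ exceeds $\Phi-1$ with probability at most $(MN)^{-2}$; on the complementary event $|A'|\le\Phi-1$, and provided $T_{ij}$ is already active at the start of frame $F_{ij}$, Lemma~\ref{lemma:commit-frame} gives $c_{ij}\le(F_{ij}+1)\Phi$. The one remaining point is that $T_{ij}$ does become active no later than the start of $F_{ij}$, which holds once each of $T_{i1},\dots,T_{i(j-1)}$ commits within its own frame; these are events of exactly the type controlled by Lemmas~\ref{lemma:commit-frame} and~\ref{lemma:one-transaction}, handled precisely as in the proof of Theorem~\ref{theorem:offline-makespan} (a union bound over the at most $j$ transactions of the prefix, or by conditioning on the global good event of Theorem~\ref{theorem:offline-makespan}, which already holds with probability $1-(MN)^{-1}$). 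I would also note that if $T_{ij}$ happens to start before $F_{ij}$ and runs in low priority first, the low-priority aborts it may suffer before $F_{ij}$ do not affect the within-$F_{ij}$ argument of Lemma~\ref{lemma:commit-frame}, so the bound is unchanged. Combining, $c_{ij}\le(F_{ij}+1)\Phi=O(C+j\log(MN))$ off an event of probability $O((MN)^{-2})$, which is the corollary.

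The main obstacle I anticipate is the bookkeeping around the predecessor transactions of $T_{ij}$ in the same thread: the clean statement ``$T_{ij}$ commits within frame $F_{ij}$'' presumes $T_{ij}$ reaches that frame on schedule, which is not an event about $T_{ij}$ alone. The two ways to deal with it are (i) to absorb a factor of $j$ into the failure probability via a union bound over the prefix $T_{i1},\dots,T_{i(j-1)}$, or (ii) to phrase the corollary conditionally on the high-probability event already established in Theorem~\ref{theorem:offline-makespan}; I would make this dependence explicit rather than claim the corollary follows ``immediately.'' Everything else is routine arithmetic with $\alpha$ and $\Phi$.
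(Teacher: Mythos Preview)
Your proposal is correct and follows exactly the route the paper intends: the paper offers no proof beyond declaring the corollary ``immediate'' from Lemmas~\ref{lemma:commit-frame} and~\ref{lemma:one-transaction}, and your argument is precisely the spelling-out of that implication together with the arithmetic $(F_{ij}+1)\Phi=O(C+j\log(MN))$. You are in fact more careful than the paper, since the predecessor issue you flag (that $T_{ij}$ must reach frame $F_{ij}$ on schedule, which depends on $T_{i1},\dots,T_{i(j-1)}$) is genuinely glossed over by the word ``immediately''; the paper's stated probability $1-(MN)^{-2}$ treats only the single event from Lemma~\ref{lemma:one-transaction} and does not account for that dependence, so your suggestion to make this explicit is well taken.
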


\section{Online Algorithm}
\label{section:onlinealgorithm}

We present Algorithm {\sf Online-Greedy} (Algorithm \ref{algorithm:online-greedy}),
which is online in the sense that it does not depend
on knowing the dependency graph to resolve conflicts.
This algorithm is similar to Algorithm \ref{algorithm:offline-greedy}
with the difference that in the conflict resolution phase
we use as a subroutine a variation of Algorithm {\sf RandomizedRounds}
proposed by Schneider and Wattenhofer \cite{Schneider09}.
The makespan of the online algorithm is slightly worse than the offline algorithm,
since the duration of the phase is now $\Phi' = O(\log^2(MN))$.

There are two different priorities associated with each transaction under this algorithm.
The pair of priorities for a transaction is given as a vector
$\langle \pi^{(2)}, \pi^{(1)} \rangle$,
where $\pi^{(2)}$ represents the Boolean priority value $low$ or $high$
(with respective values 1 and 0)
as described in Algorithm \ref{algorithm:offline-greedy},
and $\pi^{(1)} \in [1, M]$ represents the random priorities used in
Algorithm {\sf RandomizedRounds}.
The conflicts are resolved in lexicographical order based on the priority vectors,
so that vectors with lower lexicographic order have higher priority.

When a transaction $T$ enters the system,
it starts to execute immediately in low priority ($\pi^{(2)} = 1$)
until the respective randomly chosen time frame $F$ starts
where it switches to high priority ($\pi^{(2)} = 0$).
Once in high priority, the field $\pi^{(1)}$ will be used to resolve conflicts
with other high priority transactions.
A transaction chooses a discrete number $\pi^{(1)}$
uniformly at random in the interval $[1, M]$ on start of the frame $F_{ij}$,
and after every abort.
In case of a conflict with another high priority transaction $K$ but
which has higher random number ($\pi^{(1)}$) than $T$,
then $T$ proceeds and $K$ aborts.
The procedure $Abort(T,K)$ aborts transaction $K$
and $K$ must hold off on restarting (i.e. hold off attempting to commit)
until $T$ has been committed or aborted.

\begin{algorithm}[t]
{\small
\KwIn{A $M \times N$ window $W$ of transactions with $M$ threads each with $N$ transactions,
where $C$ is the maximum number of transactions that a transaction can conflict within the window\;}
\KwOut{A greedy execution schedule for the window of transactions $W$\;}
\BlankLine
Divide time into time frames of duration
$\Phi' = 16 e \Phi \ln (MN)$\;
Associate pair of priorities $\langle \pi_{ij}^{(2)}, \pi_{ij}^{(1)} \rangle$
to each transaction $T_{ij}$\;
Each thread $P_i$ chooses a random number $R_i \in [0, \alpha -1]$
for $\alpha = C / \ln(NM)$\;
\ForEach{time step $t = 0,1,2,\ldots$}{
\textbf{Phase 1: Priority Assignment}\;
\ForEach{transaction $T_{ij}$}{
$F_{ij} \leftarrow R_i + (j-1)$\;
\eIf{$t < F_{ij} \cdot \Phi'$}
{
Priority $\pi_{ij}^{(2)} \gets 1 ~(Low)$\;
}{
Priority $\pi_{ij}^{(2)} \gets 0 ~(High)$\;
}}
\textbf{Phase 2: Conflict Resolution}\;
\Begin{
\If{$\pi_{ij}^{(2)} == 0$ ($T_{ij}$ has high priority)}{
\textbf{On (re)start} of transaction $T_{ij}$\;
\Begin{
$\pi_{ij}^{(1)} \leftarrow $ random integer in $[1,M]$\;
}
\BlankLine
\textbf{On conflict} of transaction $T_{ij}$ with high priority transaction $T_{kl}$\;
\Begin{
\eIf{$\pi_{ij}^{(1)} < \pi_{kl}^{(1)}$}
{
$Abort(T_{ij}, T_{kl})$\;
}{
$Abort(T_{kl}, T_{ij})$\;
}
}
}
}
}
\caption{{\sf Online-Greedy}}
\label{algorithm:online-greedy}
}
\end{algorithm}

\subsection{Analysis of Online Algorithm}

In the analysis given below,
we study the makespan and the response time of Algorithm {\sf Online-Greedy}.
The analysis is based on the following adaptation of the response time analysis
of a one-shot transaction problem with Algorithm {\sf RandomizedRounds} \cite{Schneider09}.
It uses the following Chernoff bound:

\begin{lemma}[Chernoff bound 2]
\label{lemma:chernoff2}
Let $X_1, X_2, \ldots, X_n$ be independent Poisson trials
such that, for $1 \leq i \leq n$, ${\bf Pr}(X_i = 1) = pr_i$,
where $0 < pr_i < 1$.
Then, for $X = \sum_{i=1}^{n} X_i$,
$\mu = {\bf E}[X] = \sum_{i=1}^{n} pr_i$,
and any $0 < \delta \leq 1$,
${\bf Pr}(X < (1-\delta) \mu) < e^{-\delta^2 \mu / 2}.$
\end{lemma}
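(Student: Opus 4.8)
The plan is to use the standard exponential-moment (Chernoff) method, specialized to the lower tail. First I would fix a parameter $t>0$ and, using that $x\mapsto e^{-tx}$ is strictly decreasing, rewrite the event $\{X<(1-\delta)\mu\}$ as $\{e^{-tX}>e^{-t(1-\delta)\mu}\}$. Markov's inequality then gives ${\bf Pr}(X<(1-\delta)\mu)\le {\bf E}[e^{-tX}]\cdot e^{t(1-\delta)\mu}$, valid for every $t>0$.

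Next I would bound the moment generating function. By independence of the $X_i$, ${\bf E}[e^{-tX}]=\prod_{i=1}^{n}{\bf E}[e^{-tX_i}]=\prod_{i=1}^{n}\bigl(1+pr_i(e^{-t}-1)\bigr)$. Applying $1+y\le e^{y}$ to each factor with $y=pr_i(e^{-t}-1)$ and summing the exponents yields ${\bf E}[e^{-tX}]\le e^{\mu(e^{-t}-1)}$, using $\mu=\sum_i pr_i$. Substituting back, ${\bf Pr}(X<(1-\delta)\mu)< \exp\bigl(\mu(e^{-t}-1)+t(1-\delta)\mu\bigr)$ for all $t>0$.

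Now I would optimize the exponent over $t$. Differentiating, the minimum is attained at $e^{-t}=1-\delta$, i.e. $t=-\ln(1-\delta)$, which is indeed positive precisely because $0<\delta\le 1$. Plugging this choice in gives the clean form ${\bf Pr}(X<(1-\delta)\mu)< \bigl(e^{-\delta}(1-\delta)^{-(1-\delta)}\bigr)^{\mu}$.

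Finally — and this is the only step that is not pure bookkeeping — it remains to show the analytic inequality $e^{-\delta}(1-\delta)^{-(1-\delta)}\le e^{-\delta^2/2}$ on $(0,1]$, equivalently $g(\delta):=(1-\delta)\ln(1-\delta)+\delta-\tfrac{\delta^2}{2}\ge 0$ there. I would prove this by elementary calculus: $g(0)=0$, $g'(\delta)=-\ln(1-\delta)-\delta$ with $g'(0)=0$, and $g''(\delta)=\delta/(1-\delta)\ge 0$ for $\delta\in[0,1)$; hence $g'\ge 0$, so $g$ is nondecreasing and $g\ge 0$ on $(0,1]$ (alternatively, expanding $(1-\delta)\ln(1-\delta)=-\delta+\tfrac{\delta^2}{2}+\tfrac{\delta^3}{6}+\cdots$ shows every discarded term is nonnegative). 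Combining this with the optimized bound gives ${\bf Pr}(X<(1-\delta)\mu)< e^{-\delta^2\mu/2}$, as required. I expect the MGF manipulation to be routine; the place to be careful is this last lemma, since the bound genuinely uses $\delta\le 1$ (the exponent inequality fails for larger $\delta$, and $t=-\ln(1-\delta)$ would not even be defined for $\delta\ge 1$).
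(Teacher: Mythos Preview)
Your argument is correct and is exactly the standard exponential-moment derivation of the lower-tail Chernoff bound. One tiny technicality: at the endpoint $\delta=1$ the optimizing choice $t=-\ln(1-\delta)$ is not finite, so strictly speaking you should either take a limit $t\to\infty$ or simply note that $\Pr(X<0)=0<e^{-\mu/2}$ handles that case trivially. Everything else (the MGF factorization, the $1+y\le e^y$ step, and the calculus lemma $g(\delta)\ge 0$) is fine.

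As for comparison with the paper: there is nothing to compare. The paper states this lemma as a standard Chernoff bound without proof; it is quoted only as a tool for the analysis of Algorithm~{\sf RandomizedRounds} in Lemma~\ref{theorem:roger}. So you have supplied a proof where the paper simply cites the result.
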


\begin{lemma}
\label{theorem:roger}
\textbf{(Adaptation from Schneider and Wattenhofer \cite{Schneider09})}
Given a one-shot transaction scheduling problem with $M$ transactions,
the time span a transaction $T$ needs from its first start
until commit is $16 e (d_{T}+1) \log n$
with probability at least $1-\frac{1}{n^2}$,
where $d_{T}$ is the number of transactions conflicting with $T$.
\end{lemma}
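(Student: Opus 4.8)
The plan is to bound, for a fixed transaction $T$ with $d_T$ conflicting transactions, the number of restarts $T$ suffers before it commits, and then convert that into a time bound. In Algorithm {\sf RandomizedRounds} each transaction picks a fresh uniform random priority $\pi^{(1)} \in [1,M]$ on every (re)start, and in a conflict the transaction with the smaller priority value wins. The key observation is that in any single attempt, $T$ is aborted only if one of its at most $d_T$ current neighbors holds a priority strictly smaller than $T$'s. Conditioning on the priorities, the probability that $T$ has the (strictly) smallest priority among itself and a particular competing neighbor is at least $\tfrac{1}{2}$ if they are distinct, and more carefully, the probability that $T$ beats \emph{all} of a fixed set of $k \le d_T$ simultaneous competitors in one round is at least $\tfrac{1}{k+1} \ge \tfrac{1}{d_T+1}$ (the chance that $T$'s value is the unique minimum among $k+1$ i.i.d.\ uniform draws, handling ties by noting ties only help the bound or can be broken by thread id). Hence each attempt of $T$ succeeds with probability at least $p = \tfrac{1}{d_T+1}$, independently across attempts since priorities are re-drawn.

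Next I would argue that within $16e(d_T+1)\log n$ time steps $T$ gets at least, say, $\lceil 4\log n\rceil$ fresh attempts — here one uses that each failed attempt of $T$ costs $T$ only a bounded amount of time before it restarts (a neighbor that beats $T$ commits or aborts within $O(1)$ of its own steps under the unit-delay assumption, so attempts of $T$ recur frequently; this is exactly the bookkeeping that lets the round-based analysis of \cite{Schneider09} go through). Given $r = \Theta(\log n)$ independent attempts each succeeding with probability $p \ge 1/(d_T+1)$, the expected number of successes is $\mu = rp = \Theta(\log n)$, and since we only need at least one success, we invoke the Chernoff bound of Lemma \ref{lemma:chernoff2} with $\delta$ close to $1$: ${\bf Pr}(\text{zero successes in } r \text{ attempts}) \le {\bf Pr}(X < (1-\delta)\mu) < e^{-\delta^2\mu/2}$, and by choosing the constant $16e$ large enough this is at most $1/n^2$. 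Therefore $T$ commits within $16e(d_T+1)\log n$ time with probability at least $1 - 1/n^2$.

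The main obstacle — and the place where care is genuinely needed — is the second step: translating "number of restarts" into "elapsed time." A naive count of rounds assumes the system proceeds in synchronized rounds in which every live transaction attempts once, but in the actual greedy execution a transaction may be blocked waiting for a higher-priority neighbor to finish, and a chain of such dependencies could in principle stall $T$. The resolution (following Schneider and Wattenhofer) is that whenever $T$ is not making progress, some neighbor with smaller priority is committing, and since priorities are totally ordered, the transaction of globally smallest priority among any conflicting cluster always commits; with unit execution times this guarantees that $T$ accumulates a new independent attempt every $O(1)$ steps, so $\Theta(\log n)$ attempts fit inside $16e(d_T+1)\log n$ steps once the constant absorbs the $O(d_T+1)$ blocking factor. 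I would state this blocking bound carefully, then plug it into the Chernoff estimate above to finish. Note also $n$ here is the parameter controlling the failure probability; for the one-shot instance of $M$ transactions one takes $n \ge M$, matching the usage in Algorithm {\sf Online-Greedy}.
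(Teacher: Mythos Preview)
Your argument has an internal inconsistency in the counting. You set the per-attempt success probability to $p \ge 1/(d_T+1)$, then claim that in $16e(d_T+1)\log n$ time steps $T$ collects only $r = \Theta(\log n)$ fresh attempts, and finally assert $\mu = rp = \Theta(\log n)$. But with $r = \Theta(\log n)$ and $p = 1/(d_T+1)$ you get $\mu = \Theta(\log n/(d_T+1))$, which can be $o(1)$, and no Chernoff bound will then deliver $1/n^2$. The blocking discussion you add is precisely what creates this mismatch: if blocking eats a factor $d_T+1$ out of the number of attempts, you cannot afford to also divide the success probability by $d_T+1$.

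The paper's proof avoids this by treating every one of the $16e(d_T+1)\ln n$ unit time steps as a trial (under the unit-duration assumption an aborted transaction restarts in the next step, so the hold-off costs at most one step). The per-trial success probability is taken to be at least $\tfrac{1}{e(d_T+1)}$, not $\tfrac{1}{d_T+1}$: the extra $1/e$ comes from explicitly bounding the probability that no neighbor draws the \emph{same} priority as $T$, namely $(1-1/M)^{d_T} \ge 1/e$, combined with the $1/(d_T+1)$ symmetry bound that $T$'s draw is a minimum. This is where the constant $e$ in $16e$ comes from, which your argument does not produce. With $16e(d_T+1)\ln n$ trials at success probability $1/(e(d_T+1))$ one gets $\mu = 16\ln n$, and Lemma~\ref{lemma:chernoff2} with $\delta = 1/2$ gives ${\bf Pr}(Z < 8\ln n) < e^{-2\ln n} = n^{-2}$; no $\delta \to 1$ limit is needed.
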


\begin{proof}
Consider the conflict graph $G$.
Let $N_T$ denote the set of conflicting transactions for $T$
(these are the neighbors of $T$ in $G$).
We have $d_{T} = |N_T| \leq m$.
Let $y_T$ denote the random priority number choice of $T$ in range $[1,M]$.
The probability that for transaction $T$
no transaction $K \in N_{T}$ has the same random number is:
$${\bf Pr}(\nexists K \in N_{T} | y_T = y_K)= \left (1-\frac{1}{M} \right )^{d_{T}}
\geq \left ( 1-\frac{1}{M} \right )^M \geq \frac{1}{e}.$$
The probability that $y_{T}$ is at least as small as $y_K$
for any transaction $K \in N_{T}$ is $\frac{1}{d_{T}+1}$.
Thus, the chance that $y_{T}$ is smallest and different among all its neighbors in $N_T$
is at least $\frac{1}{e (d_{T}+1)}$.
If we conduct $16 e (d_{T}+1) \ln n$ trials,
each having success probability $\frac{1}{e (d_{T}+1)}$,
then the probability that the number of successes $Z$ is less than $8 \ln n$ becomes:
${\bf Pr}(Z < 8 \cdot \ln n) <e^{-2\cdot \ln n} =\frac{1}{n^2}$,
using the Chernoff bound of Lemma \ref{lemma:chernoff2}.
\end{proof}

\begin{theorem} [makesspan of {\sf Online-Greedy}]
\label{theorem:online-makespan}
Algorithm {\sf Online-Greedy} produces a schedule of length $O(C \log (MN) + N \log^2(MN))$
with probability at least $1-\frac{2}{MN}$.
\end{theorem}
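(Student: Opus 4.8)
The plan is to mirror the structure of the proof of Theorem~\ref{theorem:offline-makespan}, but with the offline conflict-resolution step replaced by the randomized-rounds subroutine analyzed in Lemma~\ref{theorem:roger}. The key change is that a high priority transaction $T_{ij}$ is no longer guaranteed to commit within a frame merely because at most $\Phi-1$ conflicting transactions share its frame; instead, once $T_{ij}$ is high priority, among the high priority transactions active at any time it behaves exactly as a participant in a one-shot {\sf RandomizedRounds} instance whose conflict degree is bounded by the number of simultaneously-high-priority conflicting transactions. So first I would re-run the analysis of Lemma~\ref{lemma:one-transaction} verbatim: with $\alpha = C/\ln(MN)$ random frame offsets, the number $|A'|$ of conflicting transactions of $T_{ij}$ that become high priority in the \emph{same} frame $F_{ij}$ is at most $\Phi - 1 = (e^2+2)\ln(MN)$ with probability at least $(MN)^{-2}$. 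Condition on this good event.

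Second, given the good event, apply Lemma~\ref{theorem:roger} with the effective conflict degree $d_{T_{ij}} \le |A'| \le \Phi - 1 = O(\log(MN))$: the time span from when $T_{ij}$ first becomes high priority until it commits is at most $16e(d_{T_{ij}}+1)\log(MN) \le 16e\,\Phi\,\ln(MN) = \Phi'$ with probability at least $1 - (MN)^{-2}$ — which is exactly why the frame length in Algorithm~\ref{algorithm:online-greedy} is chosen to be $\Phi' = 16e\,\Phi\,\ln(MN) = O(\log^2(MN))$. Here one must be slightly careful: Lemma~\ref{theorem:roger} is stated for a static one-shot instance, whereas in the window other high priority transactions enter and leave over time; but since a transaction only aborts $T_{ij}$ if it has a smaller $\pi^{(1)}$ value among \emph{current} conflicting high priority transactions, and new arrivals only add to $N_{T_{ij}}$ up to the bound $|A'|$, the same geometric-trials argument (each time step $T_{ij}$ has probability at least $\tfrac{1}{e(d_{T_{ij}}+1)}$ of having the locally-smallest priority and thus committing) goes through with the same Chernoff bound. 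I would state this reduction explicitly as the one nonroutine point.

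Third, take a union bound over all $MN$ transactions: the probability that some transaction fails to commit in its assigned frame is at most $MN \cdot ((MN)^{-2} + (MN)^{-2}) = 2(MN)^{-1}$, giving success probability at least $1 - \tfrac{2}{MN}$ as claimed (the factor $2$ in the failure bound is the reason the theorem statement has $\tfrac{2}{MN}$ rather than $\tfrac{1}{MN}$). Finally, conditioned on global success, every transaction $T_{ij}$ commits by the end of frame $F_{ij} = R_i + (j-1)$, so the last frame used by thread $P_i$ is at most $(\alpha - 1) + N$, and the total time is at most $(\alpha + N)\Phi' = O\!\left(\left(\tfrac{C}{\ln(MN)} + N\right)\log^2(MN)\right) = O(C\log(MN) + N\log^2(MN))$.

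The main obstacle I anticipate is making the second step rigorous: Lemma~\ref{theorem:roger} is an off-the-shelf one-shot bound, but inside the window the set of competing high priority transactions is time-varying and is itself the output of a random process (which threads chose which offsets). The clean way around it is to observe that the good event from step one caps the \emph{total} number of ever-simultaneously-high-priority conflicting transactions of $T_{ij}$ by $\Phi - 1$, so that at every step the relevant degree is $\le \Phi - 1$; then the per-step success probability lower bound $\tfrac{1}{e(d_{T_{ij}}+1)} \ge \tfrac{1}{e\Phi}$ holds uniformly, and the Chernoff argument of Lemma~\ref{theorem:roger} applies with $n = MN$ and this uniform degree bound. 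Everything else is bookkeeping with the two Chernoff bounds already stated.
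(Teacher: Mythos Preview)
Your proposal is correct and follows essentially the same route as the paper: bound $|A'|$ via Lemma~\ref{lemma:one-transaction} (bad event $A$), then invoke Lemma~\ref{theorem:roger} with $d_{T_{ij}}\le \Phi-1$ and $n=MN$ to ensure commitment within $\Phi'$ (bad event $B$), union-bound the two $(MN)^{-2}$ failures over all $MN$ transactions, and read off $(\alpha+N)\Phi'$ as the makespan. Your discussion of why the one-shot bound still applies in the time-varying setting is more careful than the paper's own proof; the only slip is a typo in your first paragraph, where ``with probability at least $(MN)^{-2}$'' should read ``with probability at least $1-(MN)^{-2}$.''
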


\begin{proof}
According to the algorithm, a transaction $T_{ij}$ becomes high priority ($\pi_{ij}^{(1)} = 0$)
in frame $F_{ij}$.
When this occurs the transaction will start to compete with other transactions which became
high priority during the same frame.
Lemma \ref{lemma:commit-frame} from the analysis of Algorithm \ref{algorithm:offline-greedy},
implies that the effective degree of $T_{ij}$ with respect to high priority transactions
is $d_T > \Phi - 1$ with probability at most $(MN)^{-2}$ (we call this bad event $A$).
From Lemma \ref{theorem:roger},
if $d_T \leq \Phi - 1$,
the transaction will not commit within $16 e (d_{T}+1) \log n \leq \Phi'$ time slots
with probability at most $(MN)^{-2}$ (we call this bad event $B$).
Therefore, the bad event that $T_{ij}$ does not commit in $F_{ij}$ occurs
when either bad event $A$ or bad event $B$ occurs,
which happens with probability at most $(MN)^{-2} + (MN)^{-2} = 2(MN)^{-2}$.
Considering now all the $MN$ transactions,
the probability of failure is at most $2/MN$.
Thus, with probability at least $1- 2/NM$, every transaction $T_{ij}$
commits during the $F_{ij}$ frame.
The total duration of the schedule is bounded by
$(\alpha + N)\Phi' = O(C \log (MN)+ N \log^2(MN)).$
\end{proof}

\begin{corollary}[competitive ratio of {\sf Online-Greedy}]
When $ C \leq N \cdot \ln(MN)$,
$CR(\mbox{{\sf Online-Greedy}}, W) = O(\log^2(NM))$,
with high probability.
\end{corollary}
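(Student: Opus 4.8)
The plan is to derive this as an immediate consequence of Theorem~\ref{theorem:online-makespan} together with a trivial lower bound on the optimal makespan. First I would recall that, by Theorem~\ref{theorem:online-makespan}, Algorithm {\sf Online-Greedy} produces a schedule of length $O(C\log(MN) + N\log^2(MN))$ with probability at least $1 - \frac{2}{MN}$. Under the hypothesis $C \leq N\ln(MN)$ the first term satisfies $C\log(MN) = O(N\log^2(MN))$, so the whole bound collapses to $O(N\log^2(MN))$ with high probability.

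Next I would establish the lower bound $makespan(\textsc{OPT}, W) \geq N$. This holds because each thread $P_i$ must execute its $N$ transactions $T_{i1}, \ldots, T_{iN}$ strictly sequentially (transaction $T_{ij}$ starts only after $T_{i(j-1)}$ commits), and each transaction needs one time unit; hence no schedule, optimal or otherwise, can finish before time $N$. This is exactly the same observation already invoked for the offline competitive ratio corollary.

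Finally I would combine the two bounds via the definition of competitive ratio: on the high-probability event where the makespan bound of Theorem~\ref{theorem:online-makespan} holds,
\[
CR(\mbox{{\sf Online-Greedy}}, W) = \frac{makespan(\mbox{{\sf Online-Greedy}}, W)}{makespan(\textsc{OPT}, W)} \leq \frac{O(N\log^2(MN))}{N} = O(\log^2(MN)),
\]
and since $\log^2(MN) = \log^2(NM)$ this is the claimed bound. There is essentially no obstacle here: the only thing to be careful about is that the ``with high probability'' qualifier is inherited verbatim from Theorem~\ref{theorem:online-makespan} (the $1-\frac{2}{MN}$ failure probability), and that the constant hidden in the $O(\cdot)$ absorbs the factor $\ln(MN)$ coming from the case $C = \Theta(N\ln(MN))$.
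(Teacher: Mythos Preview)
Your proposal is correct and matches the paper's approach exactly: the paper states this corollary without a separate proof, relying (as it did for the analogous offline corollary) on the makespan bound of Theorem~\ref{theorem:online-makespan} together with the trivial lower bound $makespan(\textsc{OPT},W)\geq N$. Your write-up simply makes explicit the two-line computation the paper leaves implicit.
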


\begin{corollary}[response time of {\sf Online-Greedy}]
\label{theorem:online-timespan}
The time that a transaction $T_{ij}$ needs to commit from the moment it starts
is $O(C \log (MN) + j \cdot \log^2 (MN))$ with probability at least $1-\frac{2}{(MN)^2}$.
\end{corollary}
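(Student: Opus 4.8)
The plan is to mirror the proof of Theorem~\ref{theorem:online-makespan}, but instead of accounting for the makespan of the whole window, I would track only the time a single fixed transaction $T_{ij}$ needs from the moment it enters the system until it commits. First I would recall that $T_{ij}$ starts executing in low priority as soon as its predecessor $T_{i(j-1)}$ commits, and it switches to high priority at the start of frame $F_{ij} = R_i + (j-1)$. Since $R_i \in [0,\alpha-1]$, the total number of frames elapsed from the start of $T_{ij}$'s own execution until the end of frame $F_{ij}$ is at most $R_i + (j-1) + 1 \le \alpha + j$. Multiplying by the frame length $\Phi' = 16 e \Phi \ln(MN) = O(\log^2(MN))$ gives a bound of $(\alpha + j)\,\Phi' = O(C \log(MN) + j \log^2(MN))$ on the elapsed wall-clock time, provided $T_{ij}$ actually commits by the end of frame $F_{ij}$.

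Next I would argue that $T_{ij}$ does indeed commit within frame $F_{ij}$ with the stated probability. This is exactly the ``good event'' analyzed in the proof of Theorem~\ref{theorem:online-makespan}: by Lemma~\ref{lemma:one-transaction} the number of conflicting transactions that become high priority simultaneously with $T_{ij}$ exceeds $\Phi - 1$ with probability at most $(MN)^{-2}$ (bad event $A$), and, conditioned on this effective degree being at most $\Phi-1$, Lemma~\ref{theorem:roger} shows that $T_{ij}$ fails to commit within $16e(d_T+1)\log n \le \Phi'$ steps with probability at most $(MN)^{-2}$ (bad event $B$). A union bound over $A$ and $B$ gives failure probability at most $2(MN)^{-2}$, hence $T_{ij}$ commits within its high-priority frame with probability at least $1 - 2/(MN)^2$, which is the claimed bound.

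Combining the two parts: with probability at least $1 - 2/(MN)^2$, transaction $T_{ij}$ commits by the end of frame $F_{ij}$, and in that case the time from its start to its commit is at most $(\alpha+j)\Phi' = O(C\log(MN) + j\log^2(MN))$, completing the proof. The one subtlety I would be careful about is that here we do not take a union bound over all $MN$ transactions — the statement concerns a single transaction $T_{ij}$, so the failure probability stays at $2/(MN)^2$ rather than degrading to $2/(MN)$; I would make sure the frame-count bookkeeping ($\alpha + j$ rather than $\alpha + N$) is stated cleanly and that $n$ is instantiated appropriately (e.g. $n = MN$) so that the $\log n$ factors in Lemma~\ref{theorem:roger} match the $\log(MN)$ factors in $\Phi'$. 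There is no real obstacle beyond this careful bookkeeping, since all the probabilistic heavy lifting is already done in Lemmas~\ref{lemma:one-transaction} and~\ref{theorem:roger}.
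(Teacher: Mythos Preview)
Your proposal is correct and matches the paper's (implicit) approach: the paper does not give a separate proof for this corollary, treating it as an immediate consequence of the analysis in Theorem~\ref{theorem:online-makespan} specialized to a single transaction, exactly as you do. Your bookkeeping---replacing $\alpha+N$ by $\alpha+j$ in the frame count and retaining the per-transaction failure probability $2(MN)^{-2}$ without a union bound---is precisely the refinement the corollary requires, and the instantiation $n=MN$ in Lemma~\ref{theorem:roger} is the intended one.
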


\section{Adaptive Algorithm}
\label{section:adaptivealgorithm}

A limitation of Algorithms \ref{algorithm:offline-greedy} and \ref{algorithm:online-greedy}
is that $C$ needs to be known ahead for each window $W$ that the algorithms are applied to.
We show here that it is possible to guess the value $C$ in a window $W$.
We present the Algorithm {\sf Adaptive-Greedy} (Algorithm \ref{algorithm:adaptive-greedy})
which can guess the value of $C$.
From the analysis of Algorithms \ref{algorithm:offline-greedy} and \ref{algorithm:online-greedy},
we know that the knowledge of the value $C$ plays vital role in the probability of success of the algorithms.

\begin{algorithm}[t]
{\small
\KwIn{An $M \times N$ execution window $W$ with $M$ threads each with $N$ transactions, where $C$ is unknown\;}
\KwOut{A greedy execution schedule for the window of transactions\;}
\BlankLine

Associate triplet of priorities $\langle \pi^{(3)}, \pi^{(2)}, \pi^{(1)} \rangle$ 
to each transaction when available for execution\;
\textbf{Code for thread $P_i$}\;
\Begin{
Initial contention estimate $C_i \leftarrow 1$\;
\Repeat {all transactions are committed}
{
{\sf Online-Greedy($C_i$, $W$)}\;
\If {bad event}{$C_i\leftarrow 2 \cdot C_i$ ;
}
}
\caption{{\sf Adaptive-Greedy}}
\label{algorithm:adaptive-greedy}
}
}
\end{algorithm}

In {\sf Adaptive-Greedy} each thread $P_i$ attempts to guess individually 
the right value of $C$.
The algorithm works based
on the exponential back-off strategy used by many contention managers
developed in the literature such as {\sf Polka}. 
The algorithm works as follows: each thread starts with assuming $C = 1$.
Based on the current estimate $C$ then the thread attempts to execute Algorithm \ref{algorithm:online-greedy},
for each of its transactions assuming the window size $M \times N$.
Now, if the choice of $C$ is correct then each transactions of the thread in the window $W$ 
of the thread $P_i$ should commit within the designated frame that it becomes high priority.
Thus, all transactions of the frame
should commit within the makespan time estimate Algorithm \ref{algorithm:online-greedy}
which is $\tau_C = O(C \log (MN) + N \log^2(MN))$.
However, if during $\tau_C$ some thread does not commit within its designated frame (bad event),
then thread $P_i$ will assume that the choice of $C$ was incorrect,
and will start over again with the remaining transactions assuming $C = 2C'$,
where $C'$ is the previous estimate for $C$.
Eventually thread $P_i$ will guess the correct value of $C$ for the window $W$,
and all its transactions will commit within the respective time.  
 
The different threads adapt independently from each other to the correct value of $C$.
At the same moment of time the various threads may have assumed different values of $C$.
The threads with higher estimate of $C$ will be given higher priority
in conflicts, since threads with lower $C$ most likely have guessed the wrong $C$ 
and are still adapting.
In order to handle conflicts each transaction uses a vector of priorities 
with three values $\langle \pi^{(3)}, \pi^{(2)}, \pi^{(1)} \rangle$.
The value of priority entry $\pi^3$ is inversely proportional to the current guess of $C$ for the thread,
so that higher value of $C$ implies higher priority.
The last two entries $\pi^{(2)}$ and $\pi^{(1)}$ are the same as in Algorithm \ref{algorithm:online-greedy}.
It is easy to that the correct choice of $C$ will be reached by a thread $P_i$ within $\log C$ iterations.
The total makespan and response time is asymptotically the same as with Algorithm \ref{algorithm:online-greedy}.

\section{Optimal Window Decomposition}
\label{section:decomposition}

In this section we are interested in partitioning a $M \times N$ window $W$ into some decomposition of
sub-windows such that if we schedule the transactions of each sub-window separately
using one of our greedy contention managers
then the sum of the makespans of the sub-windows is better than scheduling all the transactions
of $W$ as a single window.
In particular we are seeking a decomposition that minimizes the maximum {\em density} of the sub-windows,
where the density expresses how much larger is the contention with respect to the number of transactions per thread.

For window $W$ with congestion $C$ we define the density as $r = C/N$.
Consider some decomposition $D$ of window $W$ into
different sub-windows $D = \{W_1, \cdots, W_k\}$,
where sub-window $W_i$ has respective size $M \times X_i$.
Let $C_i$ denote the contention of window $w_i$.
The density of $W_i$ is $r_i = C_i / X_i$.
Let $r_D = \max_{W_i \in D} r_i$.
The {\em optimal window decomposition} $D^*$
has density $r_{D^*} = min_{D \in {\cal D}} r_D$,
where ${\cal D}$ denotes the set all possible decompositions of $W$.
Note that different decompositions in ${\cal D}$ may have different number of windows.
Two example decompositions members of ${\cal D}$ is one that consists only of $W$,
and another that consists of all single column windows of $W$.

The optimal window decomposition $D^*$ can provide asymptotically better makespan for $W$ if $r_{D^*} = o(r)$.
Using one of our greedy algorithms, the makespan of each sub-window $W_i \in D^*$
is ${\widetilde O}((1+r_{D^*})X_i)$ (where the notation ${\widetilde O}$ hides polylog factors).
Thus, using $D^*$, the makespan for the whole window $W$ becomes
${\widetilde O} ((1+r_{D^*})\sum_{W_i \in D^*} X_i) = {\widetilde O} ((1+r_{D^*}) N)$.
If we apply one of our greedy algorithms in the whole window $W$ directly,
then the makespan for $W$ is ${\widetilde O} ((1+r) N)$,
which may be asymptotically worse than using the optimal decomposition $D^*$ when $r_{D^*} = o(r)$.

\begin{figure}[ht]
\centering
\includegraphics[height=2.1in,width=4.5in]{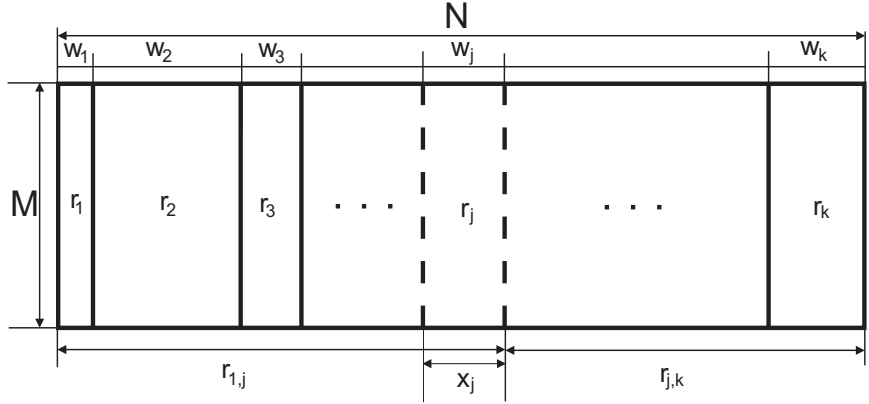}
\caption{Optimal window decomposition}
\label{figure:optimalwindow}
\end{figure}

We use a dynamic programming approach to compute the optimal decomposition $D^*$ of $W$.
The idea is compute the optimal decomposition of all prefix windows of $W$.
As shown in Figure \ref{figure:optimalwindow},
our goal is to determine the optimal window decomposition
including the prefix window up to column $k$
provided that optimal window decomposition till column $k-1$ has been already computed.
In this case, there are $k$ possible combinations to examine for finding the optimal window size which will
minimize the maximum of all the contention densities.
The details are in the proof of the following theorem.

\begin{theorem}[optimal window decomposition]
\label{lemma: window-decomposition}
The optimal window decomposition $D^*$ for an arbitrary $M \times N$ window $W$
can be computed in polynomial time.
\end{theorem}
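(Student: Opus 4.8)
The plan is to recognize the problem as a one-dimensional interval-partition problem on the $N$ columns and to solve it by dynamic programming over prefixes. Every sub-window in a decomposition spans all $M$ threads, so a decomposition $D=\{W_1,\dots,W_k\}$ with $\sum_i X_i=N$ is nothing but an ordered partition of the column set $\{1,\dots,N\}$ into consecutive blocks; a block occupying columns $a$ through $b$ carries contention $C_{a,b}$ (the maximum degree of the conflict graph $G'$ restricted to the transactions sitting in those columns) and density $\rho(a,b)=C_{a,b}/(b-a+1)$. Let $g(k)$ denote the smallest achievable value of $\max_i r_i$ over all decompositions of the prefix window on columns $1,\dots,k$, with $g(0)=0$. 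Then $r_{D^*}=g(N)$, and an optimal decomposition is recovered by recording, for each $k$, a cut point realizing $g(k)$ and backtracking from $N$.

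First I would prove the recurrence
$$g(k)=\min_{1\le j\le k}\max\bigl(g(j-1),\ \rho(j,k)\bigr),\qquad 1\le k\le N.$$
This is the usual optimal-substructure argument: in any decomposition of columns $1,\dots,k$ the last block occupies columns $j,\dots,k$ for some $j$, the preceding blocks form a decomposition of columns $1,\dots,j-1$, and the maximum density of the full decomposition equals the maximum of $\rho(j,k)$ and the maximum density of that prefix decomposition. To minimize, the prefix part should itself be an optimal decomposition of columns $1,\dots,j-1$, whose value is $g(j-1)$ by definition, and then one takes the best cut point $j$. The only modelling remark needed is that restricting attention to contiguous blocks is precisely the notion of decomposition used in Section \ref{section:decomposition}, so the recurrence computes exactly $r_{D^*}$.

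Then I would bound the running time. Precompute $C_{a,b}$ for all $O(N^2)$ column intervals: for each left endpoint $a$, sweep the right endpoint $b$ from $a$ to $N$, maintaining the subgraph of $G'$ induced by the transactions in columns $a,\dots,b$ together with its degree sequence, and updating in time polynomial in $|W|+|E'|=O(MN+|E'|)$ each time a column is appended; this yields every contention value, hence every density $\rho(a,b)$ by a single division. Comparisons among the rationals $C_{a,b}/(b-a+1)$ are performed exactly by cross-multiplication. Filling the table $g(0),\dots,g(N)$ then costs $O(N)$ cells, each evaluated by an $O(N)$-term minimization, i.e.\ $O(N^2)$ operations, after which backtracking reconstructs $D^*$ in $O(N)$ steps. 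Altogether $D^*$ and its density $r_{D^*}$ are produced in time polynomial in $M$, $N$, and the size of the conflict graph.

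The computation is essentially routine, so there is no single hard obstacle; the step that most deserves care is the optimal-substructure justification of the recurrence --- verifying that the $\max$-density objective splits cleanly into the contribution of the last block and that of the remaining prefix, which is immediate once a decomposition is viewed as an interval partition --- together with the bookkeeping that all density comparisons are carried out over exact rationals so that the minimizing cut points are identified correctly.
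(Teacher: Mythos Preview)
Your proposal is correct and follows essentially the same dynamic-programming-over-prefixes approach as the paper: both set up the recurrence that the optimal value for the first $k$ columns is the minimum over cut points of the maximum of the optimal prefix value and the density of the last block. Your version is in fact more carefully stated (cleaner indexing, explicit precomputation of the $C_{a,b}$, exact rational comparisons, and explicit backtracking to recover $D^*$), but the underlying idea is identical.
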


\begin{proof}
From the problem description, we can readily see the overlapping-subproblems property in the optimal window decomposition problem.
Let $r_{j,k}$ denote the density in the decomposition of the sub-window $W_{j,k}$,
which starts at column $j$ and ends at column $k$, where $j \leq k$.
Let $r^*_{j,k}$ denote the maximum density in the optimal decomposition of the sub-window $W_{j,k}$.
The optimal window decomposition in this scenario
can be determined from this recursive formula:
$$r^*_{j,k}=\displaystyle{\mathop{\mbox{min}}_{1\leq j\leq k-1}}\{\max(r^*_{1,j}, (r_{j,k}))\}.$$
To find the optimal window decomposition for the $k$-th prefix window $W_{1,j}$,
we have to check for all the combinations from first to $k-1$ prefix window and the suffix up to $k$.
Using the formula we can compute $r^*_{1,k}$ for each prefix $W_{1,k}$.
Our algorithm needs $O(k)$ time to compute optimal window size for the $k$-th prefix
provided that the optimal window computation till the $(k-1)$-th prefix is known.
To compute then all the values for each window combination from 1 to $k$,
our algorithm recursively takes $O(k^2)$ steps.
The final density is $r_{D^*} = r^*_{1,N}$.
\end{proof}

\section{Conclusions} \label{section:conclusion}
In this paper, we consider greedy contention managers for transactional memory
for $M \times N$ windows of transactions with $M$ threads and $N$ transactions per thread
and present three new algorithms for contention management in transactional memory from a worst-case perspective.
These algorithms are efficient, adaptive, and handle windows of transactions
and improve on the worst-case performance of previous results.
These are the first such results for the execution of sequences of transactions instead of the one-shot problem used in other literature.
our algorithms present new trade-offs in the analysis of greedy contention managers
for transactional memory.
We also show that the optimal window decomposition can be determined using dynamic programming for any arbitrary window.
With this work, we are left with some issues for future work.
One may consider arbitrary time durations for the transactions to execute instead of the $O(1)$ time we considered in our analysis.
We believe that our results scale by a factor proportional to the longest transaction duration.
The other aspects may be to explore in deep the alternative algorithms where the randomization does not occur
at the beginning of each window but rather during the executions of the algorithm by inserting random periods of low priority
between the transactions in each thread. One may also consider the dynamic expansion and contraction of the execution window to preserve the congestion measure $C$. Thus, the execution window will not be a part of the algorithm but only a part of the analysis.
This will result to more practical algorithms which at the same time
achieve good performance guarantees.

\small
\bibliographystyle{plain}
\bibliography{podc2010}
\end{document}